\newcommand{\Tsym}{\mathcal{T}_{\rm sym}(n,m)}
\newcommand{\Tcyc}{\mathcal{T}_{\rm cyc}(n,m)}
\newtheorem{theorem}{Theorem}
\DeclarePairedDelimiter\ceil{\lceil}{\rceil}
\DeclarePairedDelimiter\floor{\lfloor}{\rfloor}
\begin{document}
\title{Quantum entanglement enables single-shot trajectory sensing \\ for weakly interacting particles}
\author{Zachary E. Chin} \email{zchin@mit.edu}\affiliation{Department of Physics, Massachusetts Institute of Technology, Cambridge, Massachusetts 02139, USA}
\author{David R. Leibrandt}\affiliation{Department of Physics, University of California, Los Angeles, CA 90095, USA}
\author{Isaac L. Chuang}\affiliation{Department of Physics, Center for Ultracold Atoms, and Research Laboratory of Electronics, \\ Massachusetts Institute of Technology, Cambridge, Massachusetts 02139, USA} 
\date{\today}
\begin{abstract}
    Sensors for mapping the trajectory of an incoming particle find important utility in experimental high energy physics and searches for dark matter. For a quantum sensing protocol that uses projective measurements on a multi-qubit sensor array to infer the trajectory of an incident particle, we establish that entanglement can dramatically reduce the particle-qubit interaction strength $\theta$ required for perfect trajectory discrimination. Within an interval of $\theta$ above this reduced threshold, any unentangled sensor requires $\Theta(\log(1/\epsilon))$ repetitions of the protocol to estimate a previously unknown particle trajectory with $\epsilon$ error probability, whereas an entangled sensor can succeed with zero error in a single shot. Furthermore, entanglement can enhance trajectory sensing in realistic scenarios where $\theta$ varies continuously over the sensor qubits, exemplified by a Gaussian-profile laser pulse propagating through an array of atoms. 
\end{abstract}

%
%
%
%

\maketitle

\textit{Introduction.---}A particle's trajectory through space and time is a fingerprint concealing its unique history and some of its most important properties. For example, the momentum and charge of high energy particles produced in colliders is revealed from the curvature of their paths through a magnetic field \cite{atlas,cms}. Additionally, the cosmic origins of particles such as muons, neutrinos, and possible dark matter candidates can be traced from snapshots of their motion taken with detectors such as bubble chambers \cite{bubbles}. In biology and chemistry, positron emission tomography localizes tumors along a line traversed by emitted gamma photons \cite{imaging}, and mass spectrometers determine molar masses by filtering charged molecular fragments according to their trajectories \cite{massspec}. Moreover, seismograph arrays and the LIGO experiment respectively infer the propagation of vibrations and gravitational waves to triangulate distant rare events \cite{seismo,ligo}.

Given that quantum sensors have previously increased sensitivities for measurements of forces \cite{force} and electric/magnetic fields \cite{electricf,magnetic}, it would be natural to expect that trajectory sensors may also benefit from the use of quantum resources. For instance, a quantum version of the bubble chamber might replace the sensitive medium of a superheated liquid with an array of qubits. Instead of leaving bubbles, an incident particle would interact with the array by applying a local unitary operation to each qubit that it intercepts along its path, and the possible trajectories could ideally be distinguished with a single projective measurement. 
%
%
%
%
%
Prior work involving spatially distributed quantum systems offers promising possibilities for this scenario. Arrays of atoms and superconducting qubits have been shown to be sensitive to incident particles \cite{quantum_particles,Ito2024,mcdermott,oliver}.  
%
%
Furthermore, entangled states have been beneficial for the problem of quantum channel discrimination, which is closely related to trajectory sensing \cite{channel_discrim,Zhuang2020}. Entangled states have also been useful for localizing few-qubit perturbations in a quantum sensor network \cite{guptaPRA,PRAhillery}, even in the presence of realistic noise \cite{gupta}. These successes offer hope that entanglement could similarly enhance quantum sensor arrays aiming to unambiguously distinguish particle trajectories involving many qubits.

The performance of a quantum trajectory sensor could be quantified by the probability of failing to determine the correct particle trajectory after a single measurement, which decreases with the interaction strength $\theta$ between the particle and sensor qubits. Our main question inquires whether entangled sensor states might exist which reduce the $\theta$ required for trajectory sensing to succeed with low failure probability using a single-shot measurement.

The concept of using projective measurements to detect various perturbations on an entangled quantum state is familiar elsewhere as the setting for quantum error correction (QEC), where the goal is instead to preserve the state. By reimagining codes as sensors rather than a means to protect information, our question equivalently asks whether quantum codes exist which allow the ``errors'' imposed by different particle trajectories to be distinguished using a syndrome measurement.

In this work and a companion paper \cite{chin}, we formulate a version of the quantum trajectory sensing (TS) problem where trajectories generally affect many qubits, the set of allowed trajectories is discrete, and the particle-qubit interaction is short-range. Namely, each qubit exactly coincident with the particle's path is rotated by the same angle $\theta\in[0,\pi]$ around some fixed axis of the Bloch sphere, where $\theta$ parameterizes the particle-qubit interaction strength. We solve this problem and show for some threshold $\theta_{\rm min}$ that for all $\theta\in[\theta_{\rm min}, \pi)$, there exist entangled sensor states that perfectly discriminate trajectories in a single shot while any unentangled sensor must instead fail with nonzero probability. For $\theta\in[\theta_{\rm min},\pi)$, an unentangled sensor requires $\Theta(\log(1/\epsilon))$ separate particles to repeatedly pass through the array along the same path to estimate their common trajectory with $\epsilon$ error probability, while an entangled sensor could perfectly determine the trajectory with just one particle. 

Interestingly, these entangled sensor states also improve TS when the interaction strength for each qubit varies continuously with its distance to the particle path. For the problem of tracking a Gaussian-profile laser pulse propagating through an array of atoms, we show that entangled sensors reduce the failure probability by an amount proportional to the beam amplitude in the wide- and weak-beam limit. 

We begin by developing intuition through a minimal example. We then derive $\theta_{\rm min}$ for two TS scenarios and characterize the TS enhancement possible with entanglement. Lastly, we link the problems of TS and error correction and provide a TS scheme which is in principle resilient to noise. This paper focuses on entanglement and the quantum enhancement offered over classical TS, while the companion paper develops the mathematical framework.


\textit{A minimal example.---}Let $\ket{\pm} = (\ket{0}\pm \ket{1})/\sqrt{2}$, and consider a two-qubit sensor $\ket{\psi}$ where the incoming particle rotates one of the qubits by $R_Z(\theta) = e^{-i\theta Z/2}$. If $\theta=\pi$, then the post-trajectory output states of $\ket{\psi}=\ket{++}$ (i.e., $\ket{+-}$ and $\ket{-+}$) are orthogonal and therefore perfectly distinguishable by a single projective measurement.  In contrast, if $\theta<\pi$ (a ``weak" interaction regime), the outputs of this unentangled sensor are not orthogonal and can only be distinguished with some nonzero probability of failure. On the other hand, for $\theta=\frac{\pi}{2}$, the entangled Bell state $\ket{\psi}=\frac{1}{\sqrt{2}}(\ket{01}+\ket{10})$ in fact allows both trajectories to be distinguished with no error, something which would be impossible in the corresponding classical scenario.  Entangled states play a major role in TS and can enable trajectory discrimination even when the interaction strength is weak. 

\textit{Trajectory sensing problem.---}The TS problem is formalized as follows. 
Suppose we are given an array of $n$ qubits labeled $1$ to $n$. A \textit{trajectory} is defined to be a set of qubit indices, and the particle will interact with the array by rotating all of the qubits in its trajectory by $R_Z(\theta)$ for some fixed, precisely known interaction strength $\theta$. Given an $n$-qubit sensor state $\ket{\psi}$ and allowed trajectory set $\mathcal{T}$, each trajectory $T\in\mathcal{T}$ yields a distinct output state $R^{(T)}(\theta)\ket{\psi}$, where $R^{(T)}(\theta)$ applies $R_Z(\theta)$ to each qubit in $T$. A \textit{TS problem} asks for what $\theta$ there exists a \textit{TS state} $\ket{\psi}$ such that all outputs $R^{(T)}(\theta)\ket{\psi}$ for $T\in\mathcal{T}$ are mutually orthogonal and therefore perfectly distinguishable---with zero probability of failure---via a single projective measurement. Given $\theta$, the criteria for $\ket{\psi}$ to be a TS state are summarized by the following orthogonality conditions: 
\begin{align}\label{eq:orth-rel}
    \bra{\psi}R^{\dagger(T)}(\theta)R^{(T')}(\theta)\ket{\psi} = \delta_{T,T'}
\end{align}
for all $T,T'\in\mathcal{T}$. Note that a TS problem is fully determined by the values of $n$ and $\mathcal{T}$. 

\textit{Symmetric trajectory sensors.---}We now introduce a general family of TS problems and provide a working example to illustrate how to systematically determine the range of $\theta$ for which a TS state exists. A TS state is called \textit{symmetric} if the corresponding $\mathcal{T}$ includes all possible trajectories encompassing $m$ qubits for some chosen integer parameter $m\leq n$ (see Figure \ref{fig:examples}a); denote this particular $\mathcal{T}$ with $\Tsym$.
\begin{figure}[htbp]
  \includegraphics[width = \linewidth]{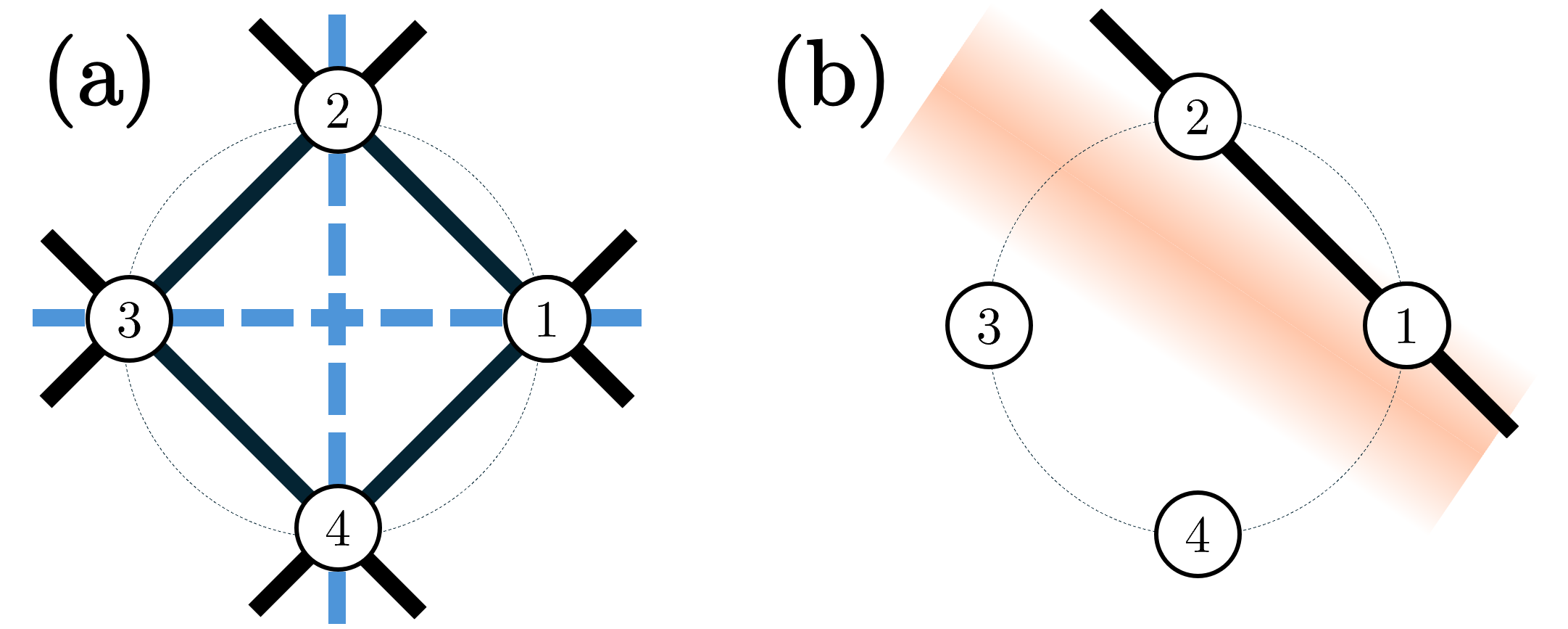}
  \caption{\textbf{(a)} Allowed particle trajectory sets $\Tsym$ (solid and dashed lines) and $\Tcyc$ (solid lines only) when $n=4$ and $m=2$. \textbf{(b)} Gaussian-profile laser pulse and its nearest discrete trajectory in $\Tcyc$.} 
  \label{fig:examples}
\end{figure}

The existence of a TS state for $\Tsym$ and a particular $\theta$ can be determined by substituting the naive general ansatz $\ket{\psi} = \sum_j c_j\ket{j}$ (with $c_j\in\mathbb{C}$ and $Z$-eigenbasis states $\ket{j}$) into Eq. (\ref{eq:orth-rel}) for all $T,T'\in\Tsym$ and checking if the resulting system admits a solution of $c_j$. A trivial solution is evident at $\theta=\pi$ (i.e., $\ket{\psi} = \ket{+}^{\otimes n}$), so we seek a nontrivial solution at any $\theta < \pi$. For general $n$ and $m$, however, this system includes $\abs{\Tsym}^2 = \binom{n}{m}^2$ equations in $2^n$ complex variables, which is computationally intractable for large $n$.

Two symmetries of the symmetric TS problem can be leveraged to greatly simplify this system. The first is a permutation symmetry: $\mathcal{T}_{\rm sym}$ does not change after permuting the index labels assigned to each qubit. Consequently, the desired TS state continues to satisfy Eq. (\ref{eq:orth-rel}) for $\mathcal{T}_{\rm sym}$ under any permutation of the indices. 

A bit-flip symmetry also holds. Note that the complex conjugation of Eq. (\ref{eq:orth-rel}) gives
\begin{align}
    \delta_{T,T'} &= \left(\bra{\psi}R^{\dagger(T)}R^{(T')}\ket{\psi}\right)^*= \bra{\psi}\left(R^{\dagger(T)}R^{(T')}\right)^\dagger \ket{\psi}\nonumber\\
    &= \bra{\psi}X^{\otimes n}\left(R^{\dagger(T)}R^{(T')}\right)X^{\otimes n} \ket{\psi}\label{eq:bit-flip}
\end{align}
for any $T,T'\in\Tsym$, since each $R^{(T)}(\theta)$ operator is a tensor product of single-qubit $R_Z(\theta)$ and identity operators and $XR_ZX = R_Z^\dagger$. Eq. (\ref{eq:bit-flip}) implies that $X^{\otimes n}\ket{\psi}$ also satisfies Eq. (\ref{eq:orth-rel}) if $\ket{\psi}$ does.

When $\mathcal{T}=\Tsym$, the fact that Eq. (\ref{eq:orth-rel}) remains satisfied under the action of these two symmetries means that the search for TS states can be restricted to a much smaller symmetrized subspace; specifically, a TS state exists for $\Tsym$ if and only if there exists a TS state which is permutation and bit-flip invariant \cite{chin}. Thus, a simplified TS state ansatz can be constructed using the following basis states which span this invariant space:
\begin{align}
    \ket{\overline{\nu}} = \sum_{j_1\ldots j_n\in\mathcal{W}_\nu\cup\mathcal{W}_{n-\nu}}\ket{j_1\ldots j_n}
\end{align}
for $\nu=0,\ldots, \floor*{\frac{n}{2}}$, where the $\ket{j_1\ldots j_n}$ are $Z$-eigenbasis states and $\mathcal{W}_{\nu}$ is the set of all length-$n$ bit-strings $j_1\ldots j_n$ with weight $\nu$ (i.e., such that $\sum_k j_k = \nu$).
These states can equivalently be viewed as superpositions of Dicke states \cite{dicke}. 

The specific case of $\mathcal{T}=\mathcal{T}_{\rm sym}(n=4,m=2)$ illustrates how to cleanly determine bounds on the interval of achievable $\theta$ using these symmetry-based simplifications. Substituting the symmetrized TS state ansatz $\ket{\psi}=\sum_{\nu=0}^{n/2}\overline{c}_\nu\ket{\overline{\nu}}$ into Eq. (\ref{eq:orth-rel}) gives a system in just 3 variables $\overline{c}_\nu\in\mathbb{C}$ as opposed to the $2^4$ variables used in the naive approach, and all but 3 of the equations become redundant. Accordingly, the existence of a TS state for a given value of $\theta$ is determined by the solution to a 3-by-3 linear system:
\begin{align}
    \begin{cases}
        1&=\abs{\overline{c}_0}^2 + 4\abs{\overline{c}_1}^2 +3\abs{\overline{c}_2}^2\\
        0&=\abs{\overline{c}_0}^2 + 2\abs{\overline{c}_1}^2(1+\cos{\theta}) + \abs{\overline{c}_2}^2 (1+2\cos{\theta})\\
        0&=\abs{\overline{c}_0}^2 + 4\abs{\overline{c}_1}^2\cos{\theta} + \abs{\overline{c}_2}^2 (2+\cos{2\theta}).
    \end{cases}
\end{align}
Assuming $\theta\neq 0$, this system transforms into a normalization condition along with two constraints $\abs{\overline{c}_0}^2 = \cos(2\theta)\abs{\overline{c}_2}^2$ and $\abs{\overline{c}_1}^2=-\cos(\theta)\abs{\overline{c}_2}^2$. Because the $\abs{\overline{c}_\nu}^2$ must be nonnegative, a solution requires $\cos{2\theta}\geq 0$ and $\cos{\theta}\leq 0$. Subsequently, a TS state exists if and only if $\theta\geq \frac{3\pi}{4}$, proving that TS states can be found in the nontrivial weak-interaction regime.

For general even $n$ with $m=n/2$, the analogous symmetrized system has a solution if the $\overline{c}_\nu$ similarly obey 
\begin{align}
    \abs{\overline{c}_\nu}^2 = (-1)^{m-k}\cos\left[\left(m-\nu\right)\theta\right]\abs{\overline{c}_{m}}^2
\end{align}
for all $\nu=0,\ldots,n/2$, and the requirement that $\abs{\overline{c}_\nu}^2 \geq 0$ bounds the achievable $\theta$ per the following theorem (proved fully in \cite{chin}):
\begin{theorem}\label{thm:S-TS}
    Suppose $\mathcal{T}=\Tsym$. For $\theta\in[0,\pi]$ and arbitrary $n>0$ and $m\geq 0$, a sufficient criterion for the existence of a TS state is 
    \begin{align}\label{eq:S-TS-criterion}
        \theta \geq \frac{(n-1)\pi}{n}.
    \end{align}
    Furthermore, when $m=\floor*{\frac{n}{2}}$ or $\ceil*{\frac{n}{2}}$, Eq. (\ref{eq:S-TS-criterion}) becomes a necessary criterion.
\end{theorem}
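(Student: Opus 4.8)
The plan is to work entirely inside the permutation- and bit-flip-invariant subspace, where (by the reduction already established) a TS state exists if and only if one of the form $\ket{\psi}=\sum_\nu \overline{c}_\nu\ket{\overline{\nu}}$ does. Writing $R^{(T)}(\theta)=e^{-i\theta Z_T/2}$ with $Z_T=\sum_{i\in T}Z_i$, the orthogonality relations of Eq.~(\ref{eq:orth-rel}) read $\bra{\psi}e^{-i\theta(Z_{T'}-Z_T)/2}\ket{\psi}=\delta_{T,T'}$. Since the shared qubits of $T$ and $T'$ cancel and the state is permutation invariant, this expectation depends only on $d=\abs{T\setminus T'}=m-\abs{T\cap T'}$; I would define $f(d)=\bra{\psi}e^{-i\theta(Z_A-Z_B)/2}\ket{\psi}$ for disjoint $d$-sets $A,B$. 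The TS conditions then collapse to $f(0)=1$ together with $f(d)=0$ for $d=1,\dots,\min(m,n-m)$. The crucial structural point is that $f(d)$ makes no reference to $m$.

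This observation yields sufficiency almost for free. A state that annihilates $f(1),\dots,f(\floor*{n/2})$ simultaneously satisfies the TS conditions for \emph{every} $m\le\floor*{n/2}$, since those require only $f(d)=0$ for $d=1,\dots,m\le\floor*{n/2}$; the bit-flip symmetry $m\leftrightarrow n-m$ then extends this to all $m$. Hence establishing sufficiency for arbitrary $m$ reduces to guaranteeing, for each $\theta\ge(n-1)\pi/n$, a valid invariant state in the single hardest case $m=\floor*{n/2}$.

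I would settle that hardest case by evaluating $f(d)$ in closed form. Resolving the expectation by total Hamming weight $w$ and grouping the qubits into $A$, $B$, and the $n-2d$ spectators, the weight-$w$ phase sum is the coefficient of $z^w$ in the generating function $G^{(d)}(z)=(1+e^{i\theta}z)^d(1+e^{-i\theta}z)^d(1+z)^{n-2d}$. Pairing these coefficients with the nonnegative variables $x_\nu=\abs{\overline{c}_\nu}^2$ turns each equation $f(d)=0$ into a linear relation among the $x_\nu$. When $m=\floor*{n/2}$ the count of relations plus normalization equals the number of variables, so the system is square; solving it expresses every $x_\nu$ as a multiple of $x_m$. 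This single solve delivers both directions at once: a solution with all $x_\nu\ge0$ exists precisely when $\theta$ lies above the threshold (giving tight-case existence, hence sufficiency via the previous paragraph), and fails to exist below it (giving necessity for $m=\floor*{n/2}$, and for $\lceil n/2\rceil$ by bit-flip symmetry).

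The crux, and the step I expect to be hardest, is to show that the binding nonnegativity constraint is $x_0\ge0$ and that its threshold is exactly $(n-1)\pi/n$. Here $x_0$ comes out proportional to a trigonometric polynomial in $\theta$ that is $\cos(n\theta/2)$ when $n$ is even and the Dirichlet kernel $\sin(n\theta/2)/\sin(\theta/2)$ when $n$ is odd; in both parities it is positive at $\theta=\pi$ and its largest root in $[0,\pi)$ lands precisely at $(n-1)\pi/n$, so $x_0\ge0$ forces $\theta\ge(n-1)\pi/n$. One must additionally verify that every remaining condition $x_\nu\ge0$ with $\nu>0$ is strictly less restrictive (as already visible for $n=4$, where $\abs{\overline{c}_1}^2\ge0$ only demands $\theta\ge\pi/2$ while $\abs{\overline{c}_0}^2\ge0$ demands $\theta\ge3\pi/4$). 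Extracting this Chebyshev/Dirichlet structure from the binomial coefficients of $G^{(d)}(z)$, and pinning the sign of $x_0$ near $\theta=\pi$, is the main technical obstacle.
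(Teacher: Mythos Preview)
Your proposal is correct and follows essentially the same route as the paper: reduce to the permutation/bit-flip--invariant ansatz, observe that the orthogonality conditions depend only on the overlap parameter $d$ (so the hardest case $m=\floor{n/2}$ implies sufficiency for all $m$), solve the resulting square linear system for the $\abs{\overline{c}_\nu}^2$, and read off the threshold from the nonnegativity constraints. The paper's sketch (for even $n$) states the closed-form solution $\abs{\overline{c}_\nu}^2=(-1)^{m-\nu}\cos[(m-\nu)\theta]\,\abs{\overline{c}_m}^2$ directly and defers the derivation to the companion paper; your generating-function computation is a natural way to arrive at exactly this formula, and it makes transparent why $\nu=0$ is the binding constraint (since $(2j-1)\pi/(2j)$ is increasing in $j=m-\nu$).
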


Theorem \ref{thm:S-TS} confirms that TS states exist for arbitrarily sized 
$\mathcal{T}_{\rm sym}$ at some nontrivial $\theta<\pi$. However, the minimum $\theta$ needed for a TS state increases towards $\pi$ as the number of qubits $n$ grows. This loss of ``sensitivity'' is intuitively follows from the fact that the number $\abs{\Tsym} = \binom{n}{m}$ of trajectories to be distinguished generally scales much faster than the sensor size $n$. This observation suggests that the sensitivity of a TS state to the interaction strength $\theta$ might be increased by restricting the trajectories in $\mathcal{T}$.

\textit{Cyclic trajectory sensors.---}Since particles like neutrinos and dark matter likely interact with lower $\theta$ than is achievable for $\mathcal{T}_{\rm sym}$, it would be desirable to find alternative TS states for even smaller values of $\theta$. Note that $\mathcal{T}_{\rm sym}$ includes many trajectories which may be unphysical in a practical setting (e.g., where the constituent qubits are not localized together along a continuous curve). In fact, many experimental applications may require relatively few trajectories---for example, neutrino paths might only comprise straight lines. 

For these reasons, we now introduce \textit{cyclic} TS states, where $\mathcal{T}$ is restricted to include only ``continuous" trajectories where the $m\leq n$ constituent qubits have consecutive indices modulo $n$ (see Figure \ref{fig:examples}a); denote such a $\mathcal{T}$ with $\Tcyc = \{z^j(\{1,\ldots,m\})\ :\ j=1,\ldots, n\}$, where $z = (1\ldots n)$ is the cyclic permutation of $n$ indices. Observe that $\abs{\Tcyc} = n$ as opposed to $\binom{n}{m}$ in the symmetric case. As before, we ask for what $\theta$ there exists a TS state satisfying Eq. (\ref{eq:orth-rel}) for all $T,T'\in \Tcyc$; however, the naive approach using a completely general TS ansatz still remains computationally intractable for large $n$.

The key insight is that there exist TS states for $\Tcyc$ which can be decomposed as the tensor product of multiple identical, smaller TS states, and this observation will allow for the system of Eq. (\ref{eq:orth-rel}) to be greatly simplified. The $\mathcal{T}_{\rm cyc}(n=4,m=2)$ example illustrates this simplification. For a given $\theta$, define $\ket{\phi}$ to be a two-qubit TS state for the smaller $\mathcal{T}_{\rm sym}(n'=2,m'=1)$ problem. Preparing each of the qubit pairs $\{1,3\}$ and $\{2,4\}$ into $\ket{\phi}$, the resulting 4-qubit state $\ket{\psi}=\ket{\phi}_{1,3}\otimes \ket{\phi}_{2,4}$ is in fact a TS state for $\mathcal{T}_{\rm cyc}(n=4,m=2)$ at the same $\theta$. 

This assertion is justified by considering the action of two trajectories $T=\{1,2\}$ and $T'=\{2,3\}$ on $\ket{\psi}$. To show the left side of Eq. (\ref{eq:orth-rel}) equals zero, the expression can be factored as 
\begin{align}
    &\bra{\psi}R^{\dagger(T)}R^{(T')}\ket{\psi}\nonumber\\
    &= \bra{\psi}(R^\dagger_Z\otimes R^\dagger_Z \otimes I\otimes I)(I\otimes R_Z \otimes R_Z\otimes I)\ket{\psi}\nonumber\\
    &= \left(\bra{\phi}_{1,3}(R^{\dagger}_Z\otimes R_Z)\ket{\phi}_{1,3}\right)\left(\bra{\phi}_{2,4}(I\otimes I)\ket{\phi}_{2,4}\right)=0
\end{align}
due to the fact that $(R_Z\otimes I)\ket{\phi}$ and $(I\otimes R_Z)\ket{\phi}$ are orthogonal by the definition of $\ket{\phi}$. A similar argument holds for any other choice of $T,T'\in\Tcyc$, from which we conclude that $\ket{\psi}$ is indeed the desired TS state. Moreover, since this $\ket{\psi}$ exists if a suitable $\ket{\phi}$ exists, the desired range of achievable $\theta$ includes any $\theta$ feasible for the smaller $\mathcal{T}_{\rm sym}(n'=2,m'=1)$ problem. Thus, by Theorem \ref{thm:S-TS}, a TS state $\ket{\psi}$ for $\mathcal{T}_{\rm cyc}(n=4,m=2)$ exists if $\theta\geq\frac{\pi}{2}$. Limiting the set of allowed trajectories from $\mathcal{T}_{\rm sym}$ to $\mathcal{T}_{\rm cyc}$ therefore expands the range of achievable $\theta$ from $\left[\frac{3\pi}{4},\pi\right]$ to $\left[\frac{\pi}{2},\pi\right]$; intuitively, decreasing the number of trajectories lowers the minimum particle-qubit interaction strength needed to distinguish them. 

For $\Tcyc$, this method of constructing TS states generalizes to show that, so long as $n=\kappa m$ for some positive integer $k>1$, a TS state exists which decomposes into $m$ identical copies of a smaller TS state for $\mathcal{T}_{\rm sym}(n'=\kappa,m'=1)$ \cite{chin}. Since the larger cyclic TS state exists over the range of $\theta$ for which the smaller symmetric TS state exists, the range of achievable $\theta$ can be expressed entirely in terms of the size parameter $\kappa$ of the smaller symmetric problem. Solving Eq. (\ref{eq:orth-rel}) for this smaller problem leads to the following theorem (also proved fully in \cite{chin}):
\begin{theorem}\label{thm:C-TS}
    Suppose $\mathcal{T}=\Tcyc$, where $n=\kappa m$ for some arbitrary $m>0$ and $\kappa>1$. Given $\theta\in[0,\pi]$, a sufficient criterion for the existence of a TS state is
    \begin{align}\label{eq:C-TS-criterion}
        \theta \geq \arccos{\left(-1+\ceil*{\frac{\kappa}{2}}^{-1}\right)}.
    \end{align}
\end{theorem}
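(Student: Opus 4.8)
The plan is to reduce Theorem \ref{thm:C-TS} entirely to the smaller symmetric problem $\mathcal{T}_{\rm sym}(\kappa,1)$. By the tensor-product construction established above, a TS state for $\Tcyc$ with $n=\kappa m$ exists whenever one exists for $\mathcal{T}_{\rm sym}(\kappa,1)$ at the same $\theta$, so it suffices to exhibit a TS state $\ket{\phi}$ on $\kappa$ qubits whose single-qubit rotations $R_Z^{(i)}(\theta)\ket{\phi}$ are mutually orthogonal across all $i=1,\ldots,\kappa$. Since only a sufficient criterion is claimed, the argument can be purely constructive. Following the symmetry reduction used for $\Tsym$, I would restrict $\ket{\phi}$ to the permutation- and bit-flip-invariant subspace, so that the squared amplitudes $\abs{c_x}^2$ in the $Z$-basis expansion $\ket{\phi}=\sum_x c_x\ket{x}$ depend only on the Hamming weight $w$ of $x$; write $p_w$ for the total probability in weight sector $w$, subject to $p_w=p_{\kappa-w}$, $p_w\geq 0$, and $\sum_w p_w = 1$.

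First I would collapse the orthogonality conditions to a single scalar equation. Because $R_Z^{\dagger(i)}R_Z^{(j)}$ is diagonal in the $Z$-basis, its expectation in $\ket{\phi}$ is a weighted sum of the phases $e^{i\theta(s_i-s_j)/2}$ with $s_k=(-1)^{x_k}$, and permutation symmetry makes every pair $i\neq j$ yield the same equation. Splitting the probability according to whether bits $i$ and $j$ agree gives $A + 2B\cos\theta = 0$ together with the normalization $A+2B=1$, where $A$ and $B$ collect the ``agree'' and ``disagree'' probabilities. Solving forces $B = \tfrac{1}{2(1-\cos\theta)}$, so the entire problem reduces to whether this target value of $B$ is attainable by some admissible distribution $\{p_w\}$.

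The next step is to express $B$ in terms of $\{p_w\}$ and optimize. A short count gives $B=\sum_w \beta_w\, p_w$ with $\beta_w = \tfrac{w(\kappa-w)}{\kappa(\kappa-1)}$, so $B$ is a convex combination of the $\beta_w$ and therefore ranges over $[0,B_{\max}]$, with $B=0$ attained by concentrating $\{p_w\}$ on the extreme sectors $w\in\{0,\kappa\}$ and $B_{\max}=\max_w\beta_w$ attained by concentrating on the central sector(s). A TS state then exists precisely when $\tfrac{1}{2(1-\cos\theta)}\leq B_{\max}$, i.e.\ when $\cos\theta \leq 1 - \tfrac{1}{2B_{\max}}$.

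The last step, and the main subtlety, is evaluating $B_{\max}$ while respecting the bit-flip constraint $p_w=p_{\kappa-w}$ and tracking the parity of $\kappa$. For even $\kappa$ the maximizer $w=\kappa/2$ is itself bit-flip symmetric and gives $B_{\max}=\tfrac{\kappa}{4(\kappa-1)}$; for odd $\kappa$ the two central sectors $w=(\kappa\pm1)/2$ must be weighted equally, yielding $B_{\max}=\tfrac{\kappa+1}{4\kappa}$. In both cases $1-\tfrac{1}{2B_{\max}} = -1 + \ceil*{\tfrac{\kappa}{2}}^{-1}$, which unifies the two parities and reproduces exactly the bound $\theta \geq \arccos\!\left(-1+\ceil*{\tfrac{\kappa}{2}}^{-1}\right)$. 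I expect the bookkeeping around this central-sector optimization---verifying that the convex maximum really sits at the middle weight and that the bit-flip symmetry does not reduce $B_{\max}$ below the naive maximum---to be the part most prone to error.
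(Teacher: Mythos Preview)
Your proposal is correct and follows essentially the same route as the paper: reduce $\Tcyc$ with $n=\kappa m$ to the smaller symmetric problem $\mathcal{T}_{\rm sym}(\kappa,1)$ via the tensor-product construction, then solve that problem within the permutation/bit-flip invariant ansatz. Your agree/disagree parameter $B=\sum_w \beta_w p_w$ with $\beta_w=\tfrac{w(\kappa-w)}{\kappa(\kappa-1)}$ is a clean way to carry out the $m'=1$ optimization that the paper defers to its companion; note also that $\beta_w=\beta_{\kappa-w}$, so the bit-flip constraint never lowers $B_{\max}$ and your flagged concern is harmless.
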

In contrast to Theorem \ref{thm:S-TS}, Theorem \ref{thm:C-TS} shows for $\Tcyc$ that, so long as $n$ is a constant multiple of $m$, the minimum $\theta$ needed for a TS state remains constant as the sensor and trajectory set grow. For example, when $n=2m$, a symmetric TS state requires $\theta\approx 0.95\pi$ to distinguish about a million trajectories, whereas a cyclic TS state only requires $\theta=0.5\pi$! Note that recent work \cite{gupta} has independently proved a separate result equivalent to Theorem \ref{thm:C-TS} for the special case of $m=1$.

\textit{Quantum enhancement from entanglement.---}The powerful ability of symmetric and cyclic TS states to perfectly distinguish trajectories in one shot for weak $\theta<\pi$ is a direct consequence of entanglement. Intuitively, entanglement allows the $\theta$-rotations applied to each qubit in a trajectory to add constructively such that Eq. (\ref{eq:orth-rel}) can be satisfied. The following theorem rigorously asserts that there are no unentangled TS states if $\theta<\pi$:
\begin{theorem}\label{thm:entanglement}
    For arbitrary $n>0$ and $\mathcal{T}$ with $\abs{\mathcal{T}} > 1$, a fully unentangled TS state of the form $\bigotimes_{i=0}^n\ket{\psi_i}$ for some single-qubit states $\ket{\psi_i}$ exists if and only if $\theta = \pi$.
\end{theorem}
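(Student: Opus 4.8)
The plan is to treat the two directions of the equivalence separately, with the forward (``only if'') implication carrying the substance, namely that entanglement becomes necessary as soon as $\theta<\pi$. For the reverse implication I would simply exhibit a product TS state at $\theta=\pi$: take $\ket{\psi}=\ket{+}^{\otimes n}$ and note that $R_Z(\pi)=-iZ$ sends $\ket{+}\mapsto-i\ket{-}$. Then for any two distinct trajectories $T\neq T'$ the two relevant outputs differ on at least one index $i\in (T\setminus T')\cup(T'\setminus T)$, where one carries $\ket{+}$ and the other $\ket{-}$; since $\braket{+}{-}=0$, the overlap in Eq.~(\ref{eq:orth-rel}) vanishes, so $\ket{+}^{\otimes n}$ satisfies the orthogonality conditions for \emph{every} $\mathcal{T}$ when $\theta=\pi$. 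This handles existence with no case analysis needed.

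For the forward direction the key step is to exploit the product structure. Writing $\ket{\psi}=\bigotimes_i\ket{\psi_i}$, I would observe that $R^{\dagger(T)}(\theta)R^{(T')}(\theta)$ is a tensor product of single-qubit operators: the identity on indices in $T\cap T'$ or outside $T\cup T'$, $R_Z^\dagger(\theta)$ on $T\setminus T'$, and $R_Z(\theta)$ on $T'\setminus T$. Consequently the left-hand side of Eq.~(\ref{eq:orth-rel}) factorizes into a product of single-qubit expectation values. I would then compute the generic nontrivial factor: for a normalized $\ket{\psi_i}=\alpha\ket{0}+\beta\ket{1}$ one finds $\bra{\psi_i}R_Z(\theta)\ket{\psi_i}=\abs{\alpha}^2 e^{-i\theta/2}+\abs{\beta}^2 e^{i\theta/2}$, whose real part equals $\cos(\theta/2)$ exactly (and identically for $R_Z^\dagger(\theta)$, its complex conjugate), using $\abs{\alpha}^2+\abs{\beta}^2=1$.

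The crux is then the state-independent lower bound $\abs{\bra{\psi_i}R_Z(\pm\theta)\ket{\psi_i}}\geq\cos(\theta/2)$, which is \emph{strictly positive} for all $\theta\in[0,\pi)$. Since $\abs{\mathcal{T}}>1$, I may fix distinct $T,T'\in\mathcal{T}$; their symmetric difference is nonempty, so the factorized overlap contains at least one factor of the form $\bra{\psi_i}R_Z(\pm\theta)\ket{\psi_i}$, with all remaining factors equal to the unit norms $\braket{\psi_i}{\psi_i}=1$. Because every factor is nonzero when $\theta<\pi$, the product cannot vanish, contradicting orthogonality; hence $\cos(\theta/2)=0$, i.e. $\theta=\pi$. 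The main obstacle is conceptually mild but is exactly the point worth isolating: a product of complex numbers vanishes only if one factor does, so no destructive interference among the factors can rescue orthogonality. This contrasts sharply with an entangled $\ket{\psi}$, for which the overlap is a genuine sum of many amplitudes that \emph{can} cancel at $\theta<\pi$, making this factorization argument precisely the formal expression of the paper's intuition that entanglement lets the per-qubit rotations combine to satisfy Eq.~(\ref{eq:orth-rel}).
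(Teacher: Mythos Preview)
Your proposal is correct and follows essentially the same approach as the paper's proof: both directions match, with your forward direction using the same factorization of the overlap into single-qubit factors $\bra{\psi_i}A_i\ket{\psi_i}$ with $A_i\in\{I,R_Z(\theta),R_Z^\dagger(\theta)\}$ and the same conclusion that orthogonality forces some $\bra{\psi_i}R_Z(\theta)\ket{\psi_i}=0$, hence $\theta=\pi$. Your explicit computation of the real part $\cos(\theta/2)$ and the resulting state-independent lower bound is a slightly more detailed justification of the step the paper states without elaboration, but the argument is the same.
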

\begin{proof}
    ``$\implies$'' direction: The expression obtained by substituting $\bigotimes_{i=0}^n\ket{\psi_i}$ into the left side of Eq. (\ref{eq:orth-rel}) can be expanded as a product of terms that look like $\bra{\psi_i}A_i\ket{\psi_i}$, where $A_i=I,R_Z(\theta),$ or $R^\dagger_Z(\theta)$. Eq. (\ref{eq:orth-rel}) for any $T\neq T'$ hence implies that some $\bra{\psi_i}R_Z(\theta)\ket{\psi_i} = 0$, which requires that $\theta=\pi$. ``$\impliedby$'' direction: if $\theta=\pi$, then $\ket{+}^{\otimes n}$ is a TS state.
\end{proof}
It follows that symmetric and cyclic TS states must be entangled if $\theta<\pi$. Hence, entangled TS states can perfectly distinguish trajectories at lower $\theta$ than unentangled TS states can.

For the problem of distinguishing a discrete trajectory set $\mathcal{T}$, this quantum enhancement can be appreciated by comparing, as a function of $\theta$, the single-shot failure probabilities of a quantum protocol utilizing an entangled TS state against a ``classical" protocol which instead uses an unentangled state as a sensor (details in \cite{supplement}).  For $\mathcal{T}=\mathcal{T}_{\rm sym}(n=4,m=2)$, the quantum protocol succeeds with lower failure probability for all nontrivial values $\theta\in(0,\pi)$ of the interaction strength (Figure \ref{fig:quant-adv}). Furthermore, the quantum protocol succeeds with \textit{zero} failure probability for all $\theta\geq\frac{3\pi}{4}$ satisfying Eq. (\ref{eq:S-TS-criterion}) (highlighted region), since Theorem \ref{thm:S-TS} guarantees the existence of a TS state over this interval of $\theta$. In contrast, so long as $\theta<\pi$, the classical protocol fails with some nonzero probability because Theorem \ref{thm:entanglement} guarantees that no unentangled TS state exists over this range of $\theta$. More generally, for arbitrary $n$ and $m$, entanglement-enhanced trajectory sensing can succeed perfectly in one shot over an interval of $\theta\in[\theta_{\rm min},\pi)$ where a protocol without entanglement cannot; $\theta_{\rm min}$ is the bound given by Eq. (\ref{eq:S-TS-criterion}) or (\ref{eq:C-TS-criterion}) when $\mathcal{T}=\mathcal{T}_{\rm sym}$ or $\mathcal{T}_{\rm cyc}$, respectively. 
\begin{figure}[htbp]
\vspace*{-2ex}
  \includegraphics[width = \linewidth]{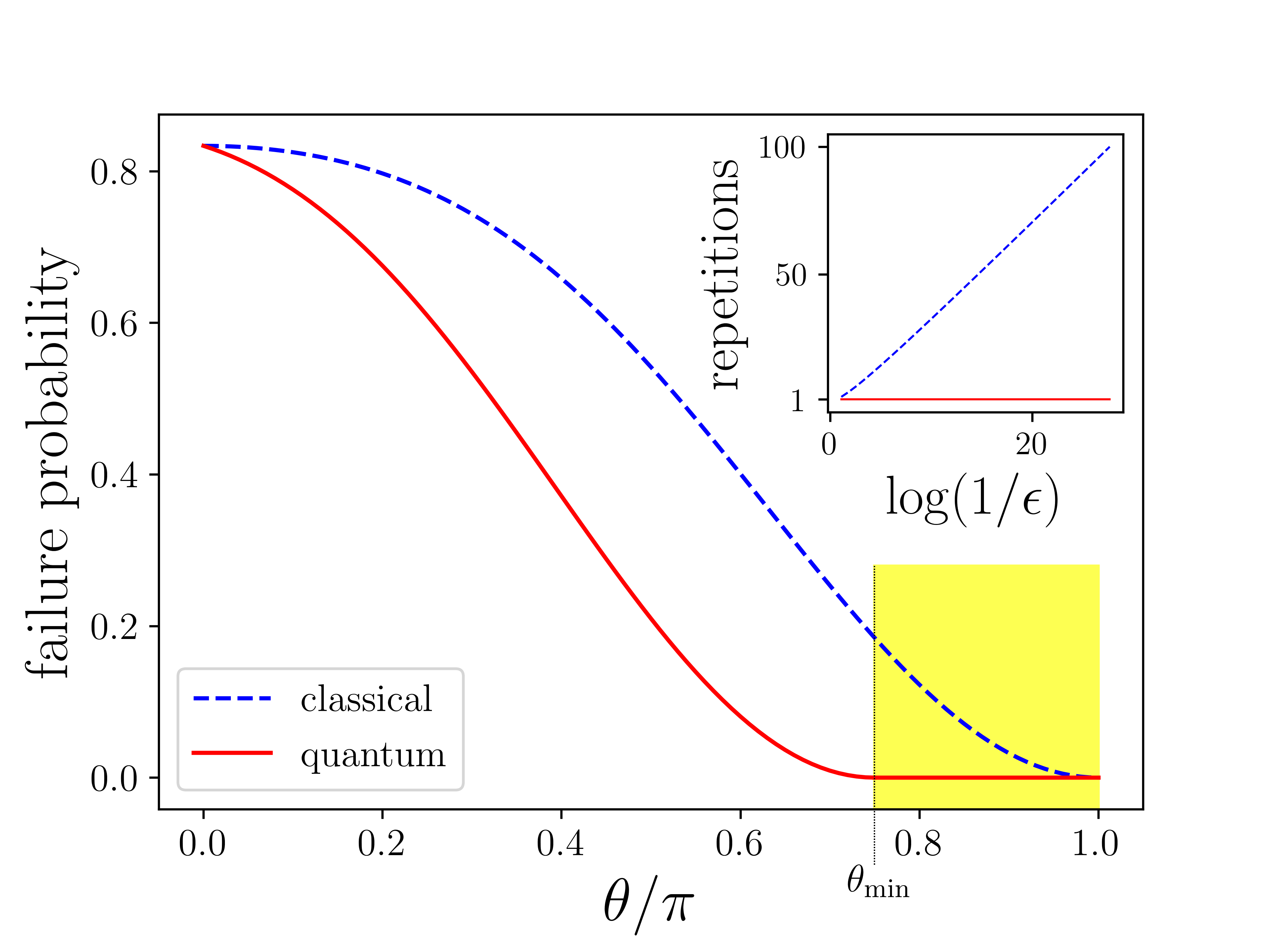}
  \caption{Single-shot failure probabilities of classical and quantum TS protocols vs. particle-qubit interaction strength $\theta$ when $\mathcal{T}=\mathcal{T}_{\rm sym}(n=4,m=2)$. Inset shows number of TS protocol repetitions required to estimate trajectory with error probability $\epsilon$ when $\theta=\theta_{\rm min}= \frac{3\pi}{4}$.} 
  \label{fig:quant-adv}
\end{figure}

Entangled sensors can exhibit decisive benefits even if the TS protocol allows for multiple repeated measurements. Suppose $r$ particles sequentially pass through the array along the \textit{same} trajectory $T$, and we measure and reinitialize the TS state between each particle interaction. We then perform a majority vote on the $r$ measurement outcomes to estimate $T$. The inset plot to Figure \ref{fig:quant-adv} shows for the protocols in the above example how large $r$ must be for the majority vote to propose $T$ with $\epsilon$ error probability when $\theta=\theta_{\rm min} =\frac{3\pi}{4}$. For an unentangled sensor with $\theta\in[\theta_{\rm min},\pi)$, the probability that the majority vote fails is given by the lower tail of a binomial distribution in $r$ trials, whose size is well-known to decrease exponentially with $r$. Subsequently, $r$ must be on the order of $\Theta(\log (1/\epsilon))$ for the majority vote to succeed with error probability $\epsilon$. On the other hand, an entangled sensor has perfect one-shot success probability over this range of $\theta$ and would only require a single particle to determine $T$ with zero error.

\textit{TS with distance-dependent interactions.}---In an experimental setting, it may be desirable to characterize a continuum of possible particle paths, especially when the particle has a distance-dependent interaction with the sensor qubits. For example, suppose a Gaussian-profile laser pulse of an arbitrary, unknown direction is incident upon an array of four atoms (numbered 1-4) such that an internal qubit of atom $i$ is rotated by an angle $\theta_i = \theta_{0}\exp(-\frac{d_i^2}{w^2})$, where $d_i$ is the distance from atom $i$ to the center of the beam, $\theta_{0}$ is proportional to the beam amplitude times the pulse duration, and $w$ is the beam waist (see Figure \ref{fig:examples}b). By preparing the four qubits into a TS state, it is possible to use a projective measurement to estimate the trajectory in some discrete set $\mathcal{T}$ which aligns best with the unknown beam path (details in \cite{supplement}). Let $\mathcal{T}=\mathcal{T}_{\rm cyc}(n=4,m=2)$, and assume $\theta_{0}\ll \pi $ and $w \gg 1$. If an entangled TS state is used as the sensor, then the one-shot failure probability for estimating the nearest discrete trajectory (averaged over possible beam paths) is $\frac{3}{4}-\frac{8}{\pi^2w^2}\theta_{0}$ to first order in $\theta_0$ and $\frac{1}{w^2}$. In contrast, the average failure probability for an unentangled sensor in the same limit is $\frac{3}{4}$. It follows that the TS enhancement due to entanglement increases linearly with the maximum beam amplitude when the beam is weak and wide.

\textit{Connection to error correction.---}A beautiful connection emerges between TS and quantum error correction when the trajectories are instead interpreted as errors to be identified.  Given $n$ and a trajectory set $\mathcal{T}$, consider the quantum ``error" channel 
\begin{align}\label{eq:errors}
    \mathcal{E}(\rho)=\frac{1}{\abs{\mathcal{T}}}\sum_{T\in\mathcal{T}}R^{(T)}(\theta)\rho R^{\dagger(T)}(\theta),
\end{align}
which corresponds to the action of picking an unknown trajectory $T\in\mathcal{T}$ uniformly at random and applying it to the input state density matrix $\rho$. In fact, any (pure) TS state $\rho_{\rm TS} = \ketbra{\psi_{\rm TS}}$ is actually a QEC code state for this error channel, since a single syndrome measurement can precisely reveal which error $R^{(T)}$ was applied to $\rho_{\rm TS}$, and $\rho_{\rm TS}$ can then be recovered by applying the operator $R^{\dagger(T)}$. Furthermore, the TS existence criteria in Eq. (\ref{eq:orth-rel}) correspond to a special case of the Knill-Laflamme QEC criteria \cite{knill} for $\mathcal{E}$ and the one-dimensional code $\{\ket{\psi}\}$. 

Although many conceivable codes might \textit{recover} the errors from $\mathcal{E}$ in Eq. (\ref{eq:errors}), TS codes are distinguished by the special ability to perfectly \textit{discriminate} all possible errors. 
Moreover, while QEC typically focuses on independent and identically-distributed single-qubit Pauli errors, trajectories instead involve highly correlated non-Pauli operations affecting many qubits.

It is natural to inquire whether the existing vast literature on quantum codes can be used to solve TS problems. Indeed, the cyclic TS state $\frac{1}{2}(\ket{0011}+\ket{0110}+\ket{1100}+\ket{1001})$ for $\mathcal{T}_{\rm cyc}(n=4,m=2)$ and $\theta = \frac{\pi}{2}$ spans a stabilizer code with generators $\langle -Z_1Z_3,-Z_2Z_4,X_1X_3,X_2X_4\rangle$; this code is a subcode of the $[[4,2,2]]$ CSS code \cite{422code}, which is the smallest toric code \cite{toric}. In fact, larger instances of the toric code can also support TS states \cite{chin}. However, symmetric TS states for $\mathcal{T}_{\rm sym}(n=4,m=2)$ with $\theta<\pi$ instead constitute a new variety of non-stabilizer permutation-invariant codes, suggesting that such codes might be useful for discriminating correlated non-Pauli errors. 

Furthermore, by concatenating TS states with familiar quantum codes, perfect one-shot trajectory sensing may in principle be achieved even with decoherence. Note that the $[[7,1,3]]$ Steane code \cite{steane} transversally implements the $R_Z\left(\frac{\pi}{2}\right)$ gate, i.e., applying $R_Z\left(\frac{\pi}{2}\right)$ to each physical qubit implements $R^\dagger_Z\left(\frac{\pi}{2}\right)$ on the logical qubit \cite{shor1997faulttolerantquantumcomputation}. Thus, if the logical qubits of multiple Steane code blocks are prepared to some symmetric/cyclic TS state with parameters $n,m$, and $\theta = \frac{\pi}{2}$, then the physical qubits together constitute a larger TS state with parameters $7n, 7m$, and $\theta=\frac{\pi}{2}$, where the trajectories consist of whole blocks. Because the Steane code corrects arbitrary single-qubit errors, this concatenated TS state succeeds perfectly even if a correctable error occurs in any of the blocks during the protocol.

\textit{Conclusion.---}Single-shot trajectory sensing using entangled TS states is particularly promising for experimental applications where the particles rarely interact with the sensor. Realistically, very weakly interacting particles such as dark matter may require a $\theta$ smaller than considered here. However, for many applications, $\mathcal{T}_{\rm sym}$ and $\mathcal{T}_{\rm cyc}$ may include more trajectories than needed, and further restricting $\mathcal{T}$ should continue to decrease the achievable $\theta$. Note also that our sensors determine a particle's trajectory assuming it has already positively entered the device; as such, it would be desirable to augment these sensors with another for simple detection. Additionally, although TS states may be susceptible to decoherence due to their entanglement, they are fortunately typically not maximally entangled and in fact must utilize unusual multipartite entanglement. Notably, the Dicke states \cite{dicke} used to construct symmetric TS states are known to be relatively robust against decoherence \cite{dicke-decohere} and have interesting entanglement properties \cite{dicke-entanglement,toth,dicke-entanglement-evolution}. Finally, given that TS states are special quantum code states, we ask how known families of quantum codes might enable new trajectory sensing capabilities. 

\begin{acknowledgments}
 Z.\,E.\,C. acknowledges support from the National Science Foundation Graduate Research Fellowship under Grant No. 2141064. D.\,R.\,L. acknowledges support from the NSF Q-SAIL National Quantum Virtual Laboratory under Grant No. 2410687. I.\,L.\,C. acknowledges support by the NSF Center for Ultracold Atoms. 
\end{acknowledgments}

\bibliography{References}
\bibliographystyle{apsrev4-1}
\end{document}


\title{Supplemental Material for: Quantum entanglement enables single-shot trajectory sensing for weakly interacting particles}
\author{Zachary E. Chin} \email{zchin@mit.edu}\affiliation{Department of Physics, Massachusetts Institute of Technology, Cambridge, Massachusetts 02139, USA}
\author{David R. Leibrandt}\affiliation{Department of Physics, University of California, Los Angeles, CA 90095, USA}
\author{Isaac L. Chuang}\affiliation{Department of Physics, Center for Ultracold Atoms, and Research Laboratory of Electronics, \\ Massachusetts Institute of Technology, Cambridge, Massachusetts 02139, USA} 

\maketitle
\onecolumngrid

This Supplemental Material describes the classical and quantum trajectory sensing (TS) protocols used in the main text. First, we explain the protocols for discriminating a discrete trajectory set whose failure probabilities are plotted in Figure 2. Afterwards, we discuss how TS states can be used for estimating the path of a Gaussian-profile laser pulse through an array of atoms, and we derive the corresponding failure probabilities for entangled and unentangled sensors. 

In the following discussion, let the notation $\ket{\psi_{\theta}}$ represent a TS state satisfying the orthogonality conditions for the given $\mathcal{T}$ at an arbitrary value of $\theta$, assuming it exists. Additionally, let $\theta_{\rm min}$ be the smallest interaction strength such that the orthogonality conditions admit a solution for all $\theta\in[\theta_{\rm min},\pi]$.

\textit{Classical protocol for discrete TS.---}The following classical TS protocol does not use entangled sensors and serves as a performance benchmark in Figure 2. The protocol takes a trajectory set $\mathcal{T}$ and interaction strength $\theta\in[0,\pi]$ as known inputs and proceeds as follows. First, the unentangled sensor state $\ket{\psi_\pi}=\ket{+}^{\otimes n}$ is prepared on the qubit array. Note that $\ket{+}^{\otimes n}$ is an optimal unentangled sensor state since $\ket{+}$ minimizes the inner product $\bra{+}R_Z(\theta)\ket{+}$ for all $\theta$. Then, an unknown trajectory $T$ is chosen uniformly at random from $\mathcal{T}$ and transforms the sensor state to $R^{(T)}(\theta)\ket{\psi_\pi}$. Each qubit of this output state is subsequently measured in the $\{\ket{+},\ket{-}\}$ basis; let $S$ be the set of qubits which were measured to be $\ket{-}$. The proposed estimate for $T$ is selected uniformly at random from the set $\{T'\in\mathcal{T}\ :\ S\subseteq T'\}$. 

\textit{Quantum protocol for discrete TS.---}The following quantum trajectory sensing protocol is used in Figure 2 to illustrate the TS enhancement possible with entangled sensors. As above, the protocol takes $\mathcal{T}$ and $\theta\in[0,\pi]$ as known inputs. Note that when $\mathcal{T}=\mathcal{T}_{\rm sym}(n,m)$ as in Figure 2, then $\theta_{\rm min}= \frac{(n-1)\pi}{n}$ is the bound given by Theorem 1. First suppose that $\theta \geq \theta_{\rm min}$. Then the TS state $\ket{\psi_\theta}$ is prepared and an unknown, uniformly random trajectory $T\in\mathcal{T}$ takes the state to $R^{(T)}(\theta)\ket{\psi_\theta}$. The output state is measured in the orthonormal (partial) basis of possible outputs $\{R^{(T')}(\theta)\ket{\psi_\theta}\ :\ T'\in\mathcal{T}\}$, allowing $T$ to be determined exactly with unit probability. 

Now suppose that $\theta<\theta_{\rm min}$. Then a TS state yielding orthogonal outputs is no longer guaranteed to exist. In this case, $\ket{\psi_{\theta_{\rm min}}}$ is thus always chosen as the input. The output state $R^{(T)}\ket{\psi_{\theta_{\rm min}}}$ resulting from $T$ is then subjected to the projective measurement with projectors
\begin{align}\label{eq:proj-meas}
    \left\{P_{T'}\ \forall T'\in \mathcal{T},\ P^{\perp}\right\},
\end{align}
where 
\begin{align}
    P_{T'}=R^{(T')}(\theta_{\rm min})\ketbra{\psi_{\theta_{\rm min}}}R^{\dagger(T')}(\theta_{\rm min})
\end{align}
for all $T'\in\mathcal{T}$ and 
\begin{align}
    P^{\perp} = I-\sum_{T'\in\mathcal{T}}P_T'.
\end{align}
If the measurement returns some $P_{T'}$, then $T'$ is proposed as the estimate for $T$. Alternately, if the measurement returns $P^{\perp}$, then a trajectory is proposed uniformly at random from $\mathcal{T}$.

\textit{Path estimation for a Gaussian-profile laser pulse.}---Assume there is a sensor array of four identical atoms equally spaced on a unit circle (numbered 1-4) and that the center of an incident Gaussian-profile laser pulse intersects the circle at least once (see Figure 1b). Suppose that each atom has an electronic transition resonant with the laser frequency such that the internal qubit of atom $i$ is rotated by an angle $\theta_i = \theta_{0}\exp(-\frac{d_i^2}{w^2})$, where $d_i$ is the distance from atom $i$ to the center of the beam, $\theta_{0}$ is proportional to the beam amplitude times the pulse duration, and $w$ is the beam waist. We seek to determine the trajectory in a given discrete set $\mathcal{T}$ which is aligns best with the unknown beam path; we assume that all viable beam paths are equally probable. In the following example, let $\mathcal{T}=\mathcal{T}_{\rm cyc}(n=4,m=2)$ so that $\mathcal{T} = \{\{1,2\},\{2,3\},\{3,4\},\{4,1\}\}$.

The interaction of the beam with the qubits depends on the direction of the beam, which can take a continuum of values. Because the atoms are arranged on the unit circle, it will be useful to describe their positions with polar coordinates $(r,\phi)$, where $r\geq 0$ and $\phi\in(-\pi,\pi]$. Thus, suppose atoms 1, 2, 3, and 4 have polar coordinates $(1,0), (1,\frac{\pi}{2}), (1,\pi)$, and $(1,-\frac{\pi}{2})$, respectively, and assume that the beam center intersects the unit circle at points $(1,\phi_1)$ and $(1,\phi_2)$. Since $\phi_1$ and $\phi_2$ fully characterize the beam path, $d_i$ and $\theta_i$ can then be written as functions of $\phi_1$ and $\phi_2$. The beam acts on the state of the four qubits via the operator
\begin{align}
    R_g(\phi_1,\phi_2) = \bigotimes_{i=1}^4 R_Z(\theta_i(\phi_1,\phi_2)),
\end{align}
where the subscript $g$ stands for ``Gaussian".

To estimate the trajectory in $\mathcal{T}$ which is closest to the beam, we will adapt the classical (unentangled) and quantum (entangled) protocols for discrete TS introduced above. The nearest trajectory $T_{\rm min}$ is defined to be the trajectory in $\mathcal{T}$ minimizing the distance
\begin{align}
    \operatorname{dist}(T,\phi_1,\phi_2) = \sum_{i\in T} \abs{d_i(\phi_1,\phi_2)}.
\end{align}
The classical protocol for path estimation prepares the internal qubits to the unentangled $\ket{\psi_{\pi}} = \ket{+}^{\otimes 4}$ and allows $R_g(\phi_1,\phi_2)$ to perturb the sensor state, where $\phi_1$ and $\phi_2$ are chosen uniformly at random from $(-\pi,\pi]$. The output is measured and an estimated trajectory $T\in \mathcal{T}$ is proposed using the same strategy as for discrete TS. The protocol succeeds if and only if $T=T_{\rm min}$.

The quantum protocol for path estimation also proceeds analogously to its discrete TS counterpart. Note that since $\mathcal{T}=\mathcal{T}_{\rm cyc}$, we can choose $\theta_{\rm min} = \frac{\pi}{2}$ by Theorem 2. Thus, the qubits are prepared to the \textit{entangled} state $\ket{\psi_{\frac{\pi}{2}}} = \frac{1}{2}(\ket{0011}+\ket{0110}+\ket{1100}+\ket{1001})$, and a random $R_g(\phi_1,\phi_2)$ perturbs the sensor. Then, the output is measured using the projectors of Eq. (\ref{eq:proj-meas}), and an estimated $T$ is proposed as in the discrete TS case. Again, the protocol succeeds if and only if $T=T_{\rm min}$.  

To derive the failure probabilities of these classical and quantum path estimation protocols, we must determine explicit expressions for the various $d_i(\phi_1,\phi_2)$. The first step is to derive an equation for the line of the beam center in Cartesian coordinates $(x,y)$. Given that the line passes through $(1,\phi_1)$ and $(1,\phi_2)$, its equation is given in point-slope form by
\begin{align}
    y-\sin{\phi_1} &= \frac{\sin{\phi_2}-\sin{\phi_1}}{\cos{\phi_2}-\cos{\phi_1}}(x-\cos{\phi_1})\\
    &= -\cot(\frac{\phi_1+\phi_2}{2})(x-\cos{\phi_1}).
\end{align}
We can rearrange this equation into the form $ax+by+c=0$:
\begin{align}
    0&=\cos(\frac{\phi_1+\phi_2}{2})x +\sin(\frac{\phi_1+\phi_2}{2})y - \left[ \cos(\frac{\phi_1+\phi_2}{2})\cos{\phi_1} + \sin(\frac{\phi_1+\phi_2}{2})\sin{\phi_1}\right]\\
    &=\cos(\frac{\phi_1+\phi_2}{2})x +\sin(\frac{\phi_1+\phi_2}{2})y - \cos(\frac{\phi_1-\phi_2}{2})
\end{align}
Now define the variables $\varphi_1=(\phi_1+\phi_2)/2$ and $\varphi_2=(\phi_1-\phi_2)/2$. We then rewrite the above equation for the line as 
\begin{align}
    \cos{\varphi_1}x+\sin{\varphi_1}y - \cos{\varphi_2} = 0.
\end{align}
The distance from a point $(x_0,y_0)$ to a line $ax+by+c=0$ is given by \cite{line-dist}
\begin{align}
    \frac{\abs{ax_0+by_0+c}}{\sqrt{a^2+b^2}}.
\end{align}
Since atom 1 is located at the point $(1,0)$ in Cartesian coordinates, it follows that
\begin{align}
    d_1(\varphi_1,\varphi_2) = \abs{\cos{\varphi_1} -\cos{\varphi_2}}.
\end{align}
Due to the rotational symmetry of the qubit array, it is easy to see that $d_i(\phi_1,\phi_2) = d_1(\phi_1-q_i,\phi_2-q_i)$, where $q_i$ is the angular coordinate of atom $i$. Thus, $d_i(\varphi_1,\varphi_2)=d_1(\varphi_1-2q_i,\varphi_2)$, which implies that
\begin{align}
    d_2(\varphi_1,\varphi_2) &= \abs{\sin{\varphi_1} -\cos{\varphi_2}},\\
    d_3(\varphi_1,\varphi_2) &= \abs{\cos{\varphi_1} +\cos{\varphi_2}}, \mathrm{and}\\
    d_4(\varphi_1,\varphi_2) &= \abs{\sin{\varphi_1} +\cos{\varphi_2}}.
\end{align}

We now derive one-shot failure probabilities for the classical and quantum path estimation protocols. Consider the quantum protocol first. Success is achieved by either measuring the projector $P_{T_{\rm min}}$ or measuring the projector $P^{\perp}$ and subsequently guessing $T_{\rm min}$ correctly when choosing a trajectory uniformly at random from $\mathcal{T}$. Letting $\ket{\psi_{\rm out}(\varphi_1,\varphi_2)} = R_g(\varphi_1,\varphi_2)\ket{\psi_{\frac{\pi}{2}}}$ be the output state of the sensor and $\rho_{\mathrm{out}}(\varphi_1,\varphi_2) = \ketbra{\psi_{\rm out}}$ be the corresponding density matrix, the success probability is then
\begin{align}
    \Pr[\mathrm{success}](\varphi_1,\varphi_2) &= \tr[P_{T_{\rm min}}\rho_{\mathrm{out}}(\varphi_1,\varphi_2)]+\frac{1}{\abs{\mathcal{T}}}\tr[P^{\perp}\rho_{\mathrm{out}}(\varphi_1,\varphi_2)]   
\end{align}
For clarity, we will use the notation $\ket{\psi_{\frac{\pi}{2}}^{(T)}} = R^{(T)}\left(\frac{\pi}{2}\right)\ket{\psi_{\frac{\pi}{2}}}$ so that $P_T = \ketbra{\psi_{\frac{\pi}{2}}^{(T)}}$. We can then write the success probability as 
\begin{align}
     \Pr[\mathrm{success}](\varphi_1,\varphi_2) &= \abs{\bra{\psi_{\frac{\pi}{2}}^{(T_{\rm min})}}R_g(\varphi_1,\varphi_2)\ket{\psi_{\frac{\pi}{2}}}}^2+\frac{1}{4}\tr[P^{\perp}\rho_{\mathrm{out}}(\varphi_1,\varphi_2)]
\end{align}
since $\abs{\mathcal{T}}=4$. We claim that the second term in the above equation evaluates to zero. To see this, let $V= \operatorname{Span}(\ket{0011},\ket{0110},\ket{1100},\ket{1001})$ and $W=\operatorname{Span}\left(\ket{\psi_{\frac{\pi}{2}}^{(T)}}:T\in\mathcal{T}\right)$. Note that $\dim V = 4$ and that $\dim W = \abs{\mathcal{T}}=4$ since all the $\ket{\psi_{\frac{\pi}{2}}^{(T)}}$ are orthogonal and therefore linearly independent. Furthermore, because $\ket{\psi_{\frac{\pi}{2}}^{(T)}}\in V$ for all $T\in\mathcal{T}$ and $\dim V = \dim W$, we have $V = W$. Consequently, since $\ket{\psi_{\rm out}}\in V$, it is also true that $\ket{\psi_{\rm out}}\in W$. Noting that $\left(\sum_{T\in\mathcal{T}}P_T\right)$ is the projector onto $W$, we thus have $\left(\sum_{T\in\mathcal{T}}P_T\right)\ket{\psi_{\rm out}} = \ket{\psi_{\rm out}} $, so $P^{\perp}\ket{\psi_{\rm out}} =0$. It follows that $\tr[P^{\perp}\rho_{\rm out}]=0$ as well, so the success probability simplifies to
\begin{align}\label{eq:q-succ-exact}
    \Pr[\mathrm{success}](\varphi_1,\varphi_2) &= \abs{\bra{\psi_{\frac{\pi}{2}}^{(T_{\rm min})}}R_g(\varphi_1,\varphi_2)\ket{\psi_{\frac{\pi}{2}}}}^2.
\end{align}

We ultimately seek the success probability averaged over $\varphi_1$ and $\varphi_2$ (i.e., over the equally-probable possible beam paths). However, $T_{\rm min}$ depends on $\varphi_1$ and $\varphi_2$, as can be seen in Figure \ref{fig:integration}. Note that since $\phi_1,\phi_2\in(-\pi,\pi]$, we require that $-\pi<\varphi_1+\varphi_2\leq \pi$ and $-\pi<\varphi_1-\varphi_2\leq \pi$. Due to symmetry, the average success probability over all allowed $\varphi_1,\varphi_2$ is equal to the average probability over any one of the four identically-colored regions in Figure \ref{fig:integration}. To understand this fact, note that the average probabilities over each of the four regions must be the same since the problem remains invariant under successive $\pi/2$ rotations of the array (these rotations realize cyclic permutations of the qubit indices). Consequently, without loss of generality, the average probability can be computed as the average over the (blue) region of $\varphi_1$ and $\varphi_2$ which corresponds to $T_{\rm min} = \{1,2\}$.
\begin{figure}[htbp]
  \includegraphics[width = 0.5\linewidth]{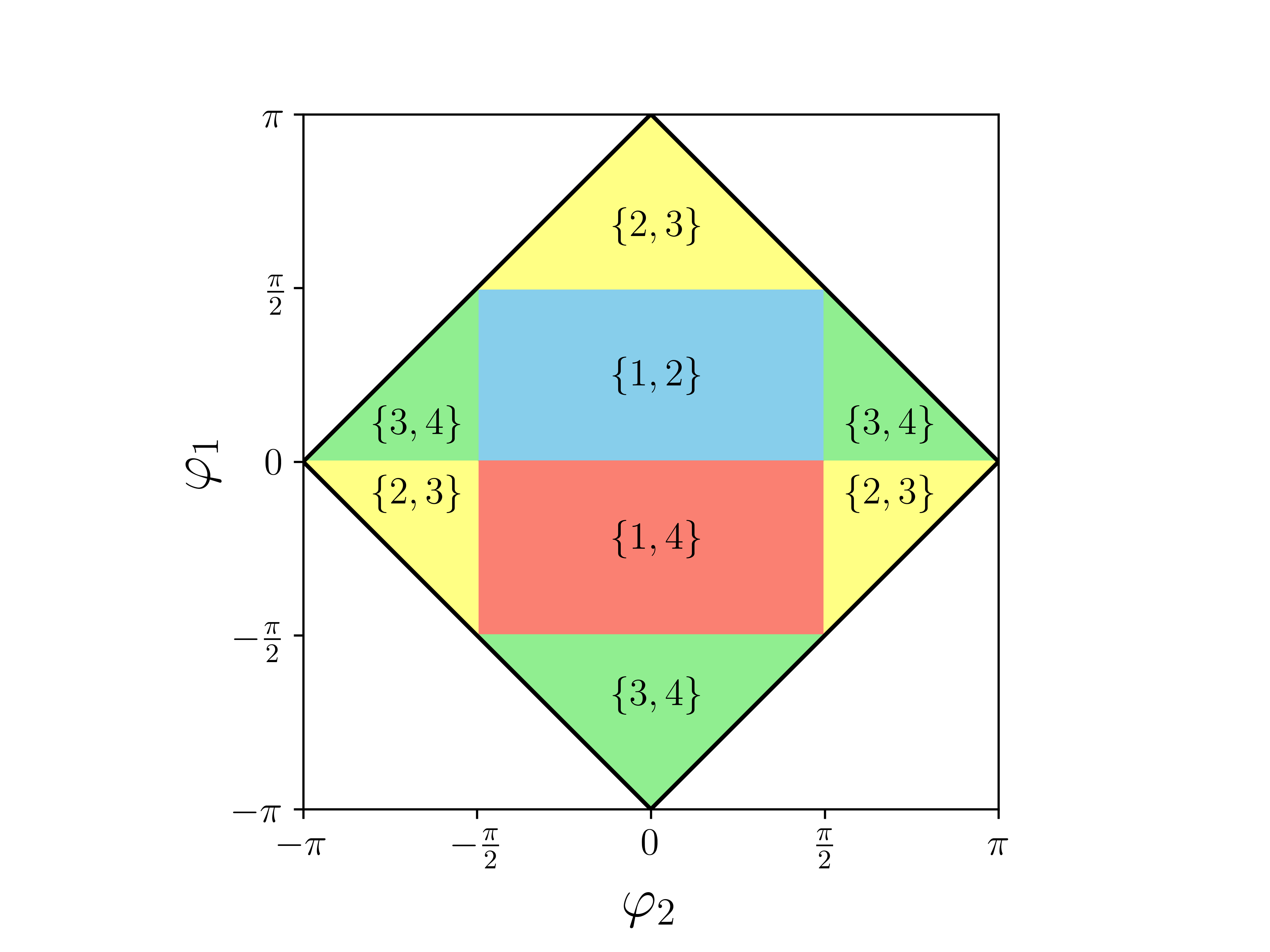}
  \caption{Nearest trajectory in $\mathcal{T}_{\rm cyc}(n=4,m=2)$ as a function of $\varphi_1$ and $\varphi_2$} 
  \label{fig:integration}
\end{figure}

However, computing such an average is challenging because it is difficult to integrate Eq. (\ref{eq:q-succ-exact}) in its exact form. To negotiate this obstacle, we will determine the success probability in the limit of $\theta_0\rightarrow 0$ and $w \gg 1$, which corresponds to the beam being wide and weak. To first order in $\frac{1}{w^2}$, we have
\begin{align}
    \theta_i(\varphi_1,\varphi_2) \approx \theta_0\left(1-\frac{d^2_i(\varphi_1,\varphi_2)}{w^2}\right).
\end{align}
Hence, to first order in $\theta_0$, we have 
\begin{align}
    R_Z(\theta_i(\varphi_1,\varphi_2))\ket{0}&=\exp(-\frac{i\theta_0}{2}\left(1-\frac{d^2_i(\varphi_1,\varphi_2)}{w^2}\right))\ket{0}\\
    &= \exp(-\frac{i\theta_0}{2})\exp(\frac{i\theta_0d^2_i(\varphi_1,\varphi_2)}{2w^2})\ket{0}\\
    &\approx \exp(-\frac{i\theta_0}{2})\left(1+\frac{i\theta_0d^2_i(\varphi_1,\varphi_2)}{2w^2}\right)\ket{0}.
\end{align}
Similarly,
\begin{align}
    R_Z(\theta_i(\varphi_1,\varphi_2))\ket{1}&\approx  \exp(\frac{i\theta_0}{2})\left(1-\frac{i\theta_0d^2_i(\varphi_1,\varphi_2)}{2w^2}\right)\ket{1}.
\end{align}
We now use this result to compute the action of $R_g(\varphi_1,\varphi_2)$ on the basis vectors of $V$:
\begin{align}
    R_g\ket{0011} &\approx \left[1+\frac{i\theta_0}{2w^2}\left(d_1^2+d_2^2-d_3^2-d_4^2\right)\right]\ket{0011}\\
    R_g\ket{0110} &\approx \left[1+\frac{i\theta_0}{2w^2}\left(d_1^2-d_2^2-d_3^2+d_4^2\right)\right]\ket{0110}\\
    R_g\ket{1100} &\approx \left[1-\frac{i\theta_0}{2w^2}\left(d_1^2+d_2^2-d_3^2-d_4^2\right)\right]\ket{1100},\ \mathrm{and}\\
    R_g\ket{1001} &\approx\left[1-\frac{i\theta_0}{2w^2}\left(d_1^2-d_2^2-d_3^2+d_4^2\right)\right]\ket{1001}.
\end{align}
Since $\ket{\psi_{\frac{\pi}{2}}}\in V$ and $\ket{\psi_{\frac{\pi}{2}}^{(T_{\rm min})}}\in V$, we can use the above equations to evaluate Eq. (\ref{eq:q-succ-exact}). Assume that $\varphi_1$ and $\varphi_2$ have been chosen such that $T_{\rm min} = \{1,2\}$. Then $\ket{\psi_{\frac{\pi}{2}}^{(T_{\rm min})}} = \frac{1}{2}(-i\ket{0011}+\ket{0110}+i\ket{1100}+\ket{1001})$, so
\begin{align}
    \bra{\psi_{\frac{\pi}{2}}^{(T_{\rm min})}}R_g(\varphi_1,\varphi_2)\ket{\psi_{\frac{\pi}{2}}}&= \frac{1}{4}\left(i\bra{0011}R_g\ket{0011}+\bra{0110}R_g\ket{0110}-i\bra{1100}R_g\ket{1100}+\bra{1001}R_g\ket{1001}\right)\\
    &\approx \frac{1}{4}\left[2-\frac{\theta_0}{w^2}(d_1^2+d_2^2-d_3^2-d_4^2)\right]\\
    &= \frac{1}{2}-\frac{\theta_0}{4w^2}(d_1^2+d_2^2-d_3^2-d_4^2).
\end{align}
It follows that 
\begin{align}
    \Pr[\mathrm{success}](\varphi_1,\varphi_2) &\approx \frac{1}{4}-\frac{\theta_0}{4w^2}(d_1^2+d_2^2-d_3^2-d_4^2)\\
    &= \frac{1}{4}+\frac{\theta_0}{w^2}(\sin{\varphi_1}+\cos{\varphi_1})\cos{\varphi_2}.
\end{align}
We now compute the average success probability by integrating $\varphi_1$ and $\varphi_2$ over the region where $T_{\rm min} = \{1,2\}$:
\begin{align}
    \operatorname{Pr}_{\rm avg}[\mathrm{success}] &= \frac{2}{\pi^2}\int_{-\frac{\pi}{2}}^{\frac{\pi}{2}}\int_{0}^{\frac{\pi}{2}}d\varphi_1 d\varphi_2\Pr[\mathrm{success}](\varphi_1,\varphi_2)\\
    &\approx \frac{1}{4}+\frac{\theta_0}{w^2}\frac{2}{\pi^2}\int_{-\frac{\pi}{2}}^{\frac{\pi}{2}}\int_{0}^{\frac{\pi}{2}}d\varphi_1 d\varphi_2(\sin{\varphi_1}+\cos{\varphi_1})\cos{\varphi_2}\\
    &= \frac{1}{4}+\frac{\theta_0}{w^2}\frac{2}{\pi^2}\cdot 4\\
    &= \frac{1}{4}+\frac{8}{\pi^2w^2}\theta_0.
\end{align}
Hence, to first order in $\theta_0$ and $\frac{1}{w^2}$, the one-shot failure probability (averaged over possible beam paths) for the quantum beam-estimation protocol is 
\begin{align}
    \operatorname{Pr}_{\rm avg}[\mathrm{failure}]&=1-\operatorname{Pr}_{\rm avg}[\mathrm{success}]\\
    &=\frac{3}{4}-\frac{8}{\pi^2w^2}\theta_0
\end{align}
when $\theta_0\rightarrow 0$ and $w\gg 1$.

Next, we derive the average failure probability of the classical path estimation protocol in the same limit. The probability of measuring qubit $i$ to be in the state $\ket{-}$ after the beam interaction is
\begin{align}
    \abs{\bra{-}R_Z(\theta_i(\varphi_1,\varphi_2))\ket{+}}^2 = \sin^2\left(\frac{\theta_i(\varphi_1,\varphi_2)}{2}\right).
\end{align}
To first order in $\frac{1}{w^2}$ and $\theta_0$, we have
\begin{align}
    \abs{\bra{-}R_Z(\theta_i(\varphi_1,\varphi_2))\ket{+}}^2 &\approx\sin^2\left[\frac{\theta_0}{2}\left(1-\frac{d^2_i(\varphi_1,\varphi_2)}{w^2}\right)\right]\\
    &\approx\frac{\theta_0^2}{4}\left(1-\frac{d^2_i(\varphi_1,\varphi_2)}{w^2}\right)^2\\
    &\approx 0.
\end{align}
Hence, to first order in $\frac{1}{w^2}$ and $\theta_0$, the probability of measuring any of the qubits to be in the state $\ket{-}$ is zero. When none of the qubits are measured to be $\ket{-}$, the classical protocol proposes a trajectory uniformly at random from $\mathcal{T}$. Subsequently, in this limit, the success probability of the classical protocol is $\frac{1}{4}$, and the failure probability is $\frac{3}{4}$.
\bibliography{References}
\bibliographystyle{apsrev4-1}